\newtheorem{observation}{Observation}
\newcommand{\Oh}{\mathcal{O}}
\newcommand{\LCE}{\ensuremath{\mathrm{LCE}}}
\newcommand{\abs}[1]{\left | #1 \right |}
\newcommand{\floor}[1]{\left \lfloor #1 \right \rfloor}
\newcommand{\ceil}[1]{\left \lceil #1 \right \rceil}
\newcommand{\set}[1]{\left \{ #1 \right \}}
\newcommand{\eps}{\varepsilon}
\newcommand{\E}{\mathbb{E}}
\DeclareMathOperator*{\argmax}{\arg\!\max}
\DeclareMathOperator*{\per}{\mathrm{per}}
\tikzset{>=stealth}
\tikzstyle{every picture}+=[remember picture]
\newcommand*\samethanks[1][\value{footnote}]{\footnotemark[#1]}
\title{Longest Common Extensions in Sublinear Space}
\author{
	Philip Bille\inst{1} \thanks{Supported by the Danish Research Council and the Danish Research Council under the Sapere Aude Program (DFF 4005-00267)} \and
	Inge Li Gørtz\inst{1} \samethanks \and
	Mathias Bæk Tejs Knudsen\inst{2}\
		\thanks{Research partly supported by Mikkel Thorup's
		    	Advanced Grant from the Danish Council for Independent Research
		    	under the Sapere Aude research career programme
				and the FNU project AlgoDisc -
		        Discrete Mathematics, Algorithms, and Data Structures.}\and
	\\
	Moshe Lewenstein\inst{3}\
		\thanks{This research was supported by a Grant from the GIF, the German-Israeli Foundation for Scientific Research and Development, and by a BSF grant 2010437}
	 \and
	Hjalte Wedel Vildhøj\inst{1}
}
\institute{Technical University of Denmark,
DTU Compute \and Department of Computer Science,
University of Copenhagen \and Bar Ilan University}
\begin{document}
\maketitle
\begin{abstract}%
The \emph{longest common extension problem} (LCE problem) is to construct a data structure for an input string $T$ of length $n$ that supports $\LCE(i,j)$ queries. Such a query returns the length of the longest common prefix of the suffixes starting at positions $i$ and $j$ in $T$. This classic problem has a well-known solution that uses $\Oh(n)$ space and $\Oh(1)$ query time. In this paper we show that for any trade-off parameter $1 \leq \tau \leq n$, the problem can be solved in $\Oh(\frac{n}{\tau})$ space and $\Oh(\tau)$ query time. This significantly improves the previously best known time-space trade-offs, and almost matches the best known time-space product lower bound.
\end{abstract}


\section{Introduction}

Given a string $T$, the \emph{longest common extension} of suffix $i$ and $j$, denoted $\LCE(i,j)$, is the length of the longest common prefix of the suffixes of $T$ starting at position $i$ and $j$. The \emph{longest common extension problem} (LCE problem) is to preprocess $T$ into a compact data structure supporting fast longest common extension queries.

The LCE problem is a basic primitive that appears as a central subproblem in a wide range of string matching problems such as approximate string matching and its variations~\cite{ALP2004,CH2002,LMS1998,LV1989,Myers1986}, computing exact or approximate repetitions \cite{GS2004,Landau2001,Main1984}, and computing palindromes~\cite{Kolpakov2008,Manacher1975}. In many cases the LCE problem is the computational bottleneck. 

Here we study the time-space trade-offs for the LCE problem, that is, the space used by the preprocessed data structure vs. the worst-case time used by LCE queries. The input string is given in read-only memory and is not counted in the space complexity. Throughout the paper we use $\ell$ as a shorthand for $\LCE(i,j)$. The standard trade-offs are as follows: At one extreme we can store a suffix tree combined with an efficient nearest common ancestor (NCA) data structure~\cite{WeinerSTconstruction,HT1984}. This solution uses $\Oh(n)$ space and supports LCE queries in $\Oh(1)$ time. At the other extreme we do not store any data structure and instead answer queries simply by comparing characters from left-to-right in $T$. This solution uses $\Oh(1)$ space and answers an $\LCE(i,j)$ query in $\Oh(\ell) = \Oh(n)$ time. Recently, Bille et al.~\cite{BilleLCE} presented a number of results. For a trade-off parameter $\tau$, they gave: 1) a deterministic solution with $\Oh(\frac{n}{\tau})$ space and $\Oh(\tau^2)$ query time, 2) a randomized Monte Carlo solution with $\Oh(\frac{n}{\tau})$ space and $\Oh(\tau \log (\frac{\ell}{\tau})) = \Oh(\tau \log (\frac {n}{\tau}))$ query time, where all queries are correct with high probability, and 3) a randomized Las Vegas solution with the same bounds as 2) but where all queries are guaranteed to be correct. Bille et al.~\cite{BilleLCE} also gave a lower bound showing that any data structure for the LCE problem must have a time-space product of $\Omega(n)$ bits.

\begin{table}[b!]
\centering

\begin{tikzpicture}
\draw node[inner sep=0,font=\footnotesize] (table) {
\begin{tabular}{|c|c|>{\footnotesize}c|c|c|c|p{2.7cm}|}
\cline{1-7}

\multicolumn{3}{|c|}{\textbf{Data Structure}} & \multicolumn{2}{c|}{\textbf{Preprocessing}} & \textbf{Trade-off} & \multirow{2}{*}{\textbf{Reference}} \\ \cline{1-5}

\footnotesize{Space} & \footnotesize{Query} & \footnotesize{Correct} & \footnotesize{Space} & \footnotesize{Time} & \textbf{range} & \\ \cline{1-7} \noalign{\smallskip} \cline{1-7}

1 & $\ell$ & always & 1 & 1 & - & Store nothing\\[0.1cm] 

$n$ & $1$ & always & $n$ & $n$ & - & Suffix tree + NCA\\[0.1cm] 

$\frac{n}{\tau}$ & $\tau^2$ & always & $\frac{n}{\tau}$ & $\frac{n^2}{\tau}$ & $1 \leq \tau \leq \sqrt{n}$ & \cite{BilleLCE}\\[0.1cm] 

$\frac{n}{\tau}$ & $\tau \log^2 \frac{n}{\tau} $ & always & $\frac{n}{\tau}$ & $n^2$ & $\frac{1}{\log n} \leq \tau \leq n$ & this paper, Sec.~\ref{sec:deterministic} \\[0.1cm] 

$\frac{n}{\tau}$ & $\tau$ & always & $\frac{n}{\tau}$ & $n^{2+\eps} $ & $1 \leq \tau \leq n$ & this paper, Sec.~\ref{sec:derandomization} \\[0.1cm] 

\cline{1-7} \noalign{\smallskip} \cline{1-7}

\rule{0pt}{9pt}$\frac{n}{\tau}$ & $\tau \log \frac{\ell}{\tau}$ & w.h.p. & $\frac{n}{\tau}$ & $n$ & $1 \leq \tau \leq n$ & \cite{BilleLCE}\\[0.1cm] 

$\frac{n}{\tau}$ & $\tau$ & w.h.p. & $\frac{n}{\tau}$ & $n \log \frac{n}{\tau}$ & $1 \leq \tau \leq n$ & this paper, Sec.~\ref{sec:randomized}\\[0.1cm] 

\cline{1-7} 

\rule{0pt}{9pt}$\frac{n}{\tau}$ & $\tau \log \frac{\ell}{\tau}$ & always & $\frac{n}{\tau}$ & $n(\tau+\log n)$ {\tiny w.h.p.} & $1 \leq \tau \leq n$ & \cite{BilleLCE}\\[0.1cm] 

$\frac{n}{\tau}$ & $\tau$ & always & $\frac{n}{\tau}$ & $n^{3/2}$ {\tiny w.h.p.} & $1 \leq \tau \leq n$ & this paper, Sec.~\ref{sec:lasvegas}\\[0.1cm] 

\cline{1-7}
\end{tabular}
};
\draw ($(table.south west)!0.6!(table.north west)$) node[inner sep=5pt,rotate=90,font=\scriptsize,align=center,anchor=south] {Deterministic} ;
\draw ($(table.south west)!0.27!(table.north west)$) node[inner sep=1pt,rotate=90,font=\scriptsize,text width=0.75cm,align=center,anchor=south] {Monte Carlo} ;
\draw ($(table.south west)!0.09!(table.north west)$) node[inner sep=1pt,rotate=90,font=\scriptsize,text width=0.75cm,align=center,anchor=south] {Las Vegas} ;
\end{tikzpicture}
\caption{Overview of solutions for the LCE problem. Here $\ell=\LCE(i,j)$, $\eps>0$ is an arbitrarily small constant and w.h.p. (with high probability) means with probability at least $1-n^{-c}$ for an arbitrarily large constant $c$. The data structure is \emph{correct} if it answers all LCE queries correctly.\label{tab:allsolutions}}
\end{table}

\paragraph{Our Results}
Let $\tau$ be a trade-off parameter. We present four new solutions with the following improved bounds. Unless otherwise noted the space bound is the  number of words on a standard RAM with logarithmic word size, not including the input string, which is given in read-only memory.  
\begin{itemize}
\item A deterministic solution with $\Oh(n/\tau)$ space and $\Oh(\tau \log^2 (n/\tau))$ query time. 
\item A randomized Monte Carlo solution with $\Oh(n/\tau)$ space and $\Oh(\tau)$ query time, such that all queries are correct with high probability.
\item A randomized Las Vegas solution with $\Oh(n/\tau)$ space and $\Oh(\tau)$ query time.
\item A derandomized version of the Monte Carlo solution with $\Oh(n/\tau)$ space and $\Oh(\tau)$ query time.
\end{itemize}

\noindent Hence, we obtain the first trade-off for the LCE problem with a linear time-space product in the full range from constant to linear space. This almost matches the time-space product lower bound of $\Omega(n)$ \emph{bits}, and improves the best deterministic upper bound by a factor of $\tau$, and the best randomized bound by a factor $\log(\frac{n}{\tau})$. See the columns marked \emph{Data Structure} in \autoref{tab:allsolutions} for a complete overview.

While our main focus is the space and query time complexity, we also provide efficient \emph{preprocessing} algorithms for building the data structures, supporting independent trade-offs between the preprocessing time and preprocessing space. See the columns marked \emph{Preprocessing} in \autoref{tab:allsolutions}.



To achieve our results we develop several new techniques and specialized data structures which are likely of independent interest. For instance, in our deterministic solution we develop a novel recursive decomposition of LCE queries and for the randomized solution we develop a new sampling technique for Karp-Rabin fingerprints that allow fast LCE queries. We also give a general technique for efficiently derandomizing algorithms that rely on ``few'' or ``short'' Karp-Rabin fingerprints, and apply the technique to derandomize our Monte Carlo algorithm. To the best of our knowledge, this is the first derandomization technique for Karp-Rabin fingerprints.

\paragraph{Preliminaries}
We assume an integer alphabet, i.e., $T$ is chosen from some alphabet $\Sigma = \{0,\ldots,n^c\}$ for some constant $c$, so every character of $T$ fits in $\Oh(1)$ words. For integers $a \leq b$, $[a,b]$ denotes the range $\{a,a+1,\ldots,b\}$ and we define $[n] = [0,n-1]$. For a string $S=S[1]S[2] \ldots S[|S|]$ and positions $1 \leq i \leq j \leq |S|$, $S[i...j]=S[i]S[i+1]\cdots S[j]$ is a \emph{substring} of length $j-i+1$, $S[i...]=S[i,|S|]$ is the $i^\text{th}$ \emph{suffix} of $S$, and $S[...i]=S[1,i]$ is the $i^\text{th}$ \emph{prefix} of $S$.


\section{Deterministic Trade-Off}\label{sec:deterministic}
Here we describe a completely deterministic trade-off for the LCE problem with $\Oh(\frac{n}{\tau}\log\frac{n}{\tau})$ space and $\Oh(\tau \log \frac{n}{\tau})$ query time for any $\tau \in [1,n]$. Substituting $\hat\tau = \tau/\log (n/\tau)$, we obtain the bounds reflected in \autoref{tab:allsolutions} for $\hat\tau \in [1/\log n, n]$.

A key component in this solution is the following observation that allows us to reduce an $\LCE(i,j)$ query on $T$ to another query $\LCE(i',j')$ where $i'$ and $j'$ are both indices in either the first or second half of $T$.

\begin{observation}\label{obs:lcereduction}
Let $i,j$ and $j'$ be indices of $T$, and suppose that $\LCE(j',j) \geq \LCE(i,j)$. Then $\LCE(i,j) = \min(\LCE(i,j'), \LCE(j',j))$.
\end{observation}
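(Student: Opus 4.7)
The plan is to write $\ell = \LCE(i,j)$, $a = \LCE(i,j')$, $b = \LCE(j',j)$, and show under the hypothesis $b \geq \ell$ that both $a \geq \ell$ and $b \geq \ell$, while at least one of $a, b$ equals $\ell$. Thus $\min(a,b) = \ell$.

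First I would establish $a \geq \ell$ by a simple transitivity argument on the prefixes. The definition of $\ell$ gives $T[i\mathinner{.\,.} i+\ell-1] = T[j\mathinner{.\,.} j+\ell-1]$, and the hypothesis $b \geq \ell$ gives $T[j'\mathinner{.\,.} j'+\ell-1] = T[j\mathinner{.\,.} j+\ell-1]$ (both are substrings of the length-$\ell$ common prefix of suffixes $j$ and $j'$). Combining, $T[i\mathinner{.\,.} i+\ell-1] = T[j'\mathinner{.\,.} j'+\ell-1]$, so $\LCE(i,j') \geq \ell$.

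Next I would argue that $\min(a,b) \leq \ell$ by contradiction. Assume $a > \ell$ and $b > \ell$. Then $T[i+\ell] = T[j'+\ell]$ and $T[j'+\ell] = T[j+\ell]$, giving $T[i+\ell] = T[j+\ell]$. Combined with the already-established equality $T[i\mathinner{.\,.} i+\ell-1] = T[j\mathinner{.\,.} j+\ell-1]$, this says $\LCE(i,j) \geq \ell+1$, contradicting the definition of $\ell$. The only subtle point, which I would treat as a small side remark rather than a full case analysis, is the boundary case in which $i+\ell$ or $j+\ell$ exceeds $|T|$, i.e., one of the suffixes $T[i\mathinner{.\,.}]$ or $T[j\mathinner{.\,.}]$ terminates at position $\ell$; in that case the LCE involving that suffix with any other index is automatically $\leq \ell$, which directly forces $a = \ell$ or $b = \ell$ without invoking the comparison of the $(\ell+1)$-th characters.

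I don't foresee any real obstacle; the entire argument is essentially a transitivity-of-equality observation on common prefixes plus one contradiction step. The only thing worth being careful about is consistently handling the endpoint case so that the proof does not implicitly assume all of $T[i+\ell], T[j+\ell], T[j'+\ell]$ exist.
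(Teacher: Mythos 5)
Your proof is correct; note that the paper states this as an Observation with no proof at all, and your argument (transitivity of prefix equality for $\min(a,b)\geq\ell$, plus the one-character contradiction for $\min(a,b)\leq\ell$) is exactly the standard justification the authors leave implicit. Your boundary remark is fine but not even needed: assuming $a>\ell$ and $b>\ell$ already forces the characters $T[i+\ell]$, $T[j'+\ell]$, $T[j+\ell]$ to exist, so the contradiction step covers the endpoint case automatically.
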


\noindent We apply \autoref{obs:lcereduction} recursively to bring the indices of the initial query within distance $\tau$ in $\Oh(\log (n/\tau))$ rounds. We show how to implement each round with a data structure using $\Oh(n/\tau)$ space and $\Oh(\tau)$ time. This leads to a solution using $\Oh(\frac{n}{\tau}\log\frac{n}{\tau})$ space and $\Oh(\tau \log \frac{n}{\tau})$ query time. Finally in \autoref{sec:nearbyindices}, we show how to efficiently solve the LCE problem for indicies within distance $\tau$ in $\Oh(n/\tau)$ space and $\Oh(\tau)$ time by exploiting periodicity properties of LCEs.


\subsection{The Data Structure}
We will store several data structures, each responsible for a specific subinterval $I=[a,b] \subseteq [1,n]$ of positions of the input string $T$. Let $I_\text{left} = [a,(a+b)/2]$, $I_\text{right} = \; ((a+b)/2,b]$, and $|I|=b-a+1$. The task of the data structure for $I$ will be to reduce an $\LCE(i,j)$ query where $i,j \in I$ to one where both indices belong to either $I_\text{left}$ or $I_\text{right}$.

The data structure stores information for $\Oh(|I|/\tau)$ suffixes of $T$ that start in $I_\text{right}$. More specifically, we store information for the suffixes starting at positions $b-k\tau \in I_\text{right}$, $k = 0,1,\ldots,(|I|/2)/\tau$. We call these the \emph{sampled positions} of $I_\text{right}$. See \autoref{fig:deterministicds} for an illustration.

\begin{figure}
\centering
\begin{tikzpicture}[brc/.style={decorate,decoration={brace,amplitude=5pt,raise=3pt}}]
\node[draw,rectangle,minimum height=0.3cm,minimum width=8cm,outer sep=0pt] (I) at (0,0) {};
\coordinate (Mtop) at ($(I.north west)!0.5!(I.north east)$);
\coordinate (Mbot) at ($(I.south west)!0.5!(I.south east)$);
\draw[dashed] (I.north west) -- ($(I.north west)+(-1,0)$);
\draw[dashed] (I.south west) -- ($(I.south west)+(-1,0)$);
\draw[dashed] (I.north east) -- ($(I.north east)+(1,0)$);
\draw[dashed] (I.south east) -- ($(I.south east)+(1,0)$);
\node[above] at (I.north west) {$a$};
\node[above] at (I.north east) {$b$};

\draw (Mbot) -- (Mtop);
\node[above] at (Mtop) {$\frac{a+b}{2}$};
\draw [brc,shorten <=1pt] (Mbot) -- (I.south west) node[midway,below=5pt] {\footnotesize $I_\text{left}$};
\draw [brc,shorten >=1pt] (I.south east) -- (Mbot) node[midway,below=5pt] {\footnotesize $I_\text{right}$};

\foreach \x in {0,1,...,5} {
	\node[circle,fill,minimum size=5pt,inner sep=0pt] (s\x) at ($(I.east)-(0.75*\x,0)$) {};
}
\draw[|-|] ($(s3)+(0,0.4)$) -- ($(s4)+(0,0.4)$) node[midway,above] {$\tau$};

\node[circle,fill,gray!50,minimum size=5pt,inner sep=0pt] (jk) at ($(I.west)!0.15!(I.east)$) {};
\node[font=\footnotesize,above=2pt] (jk) at (jk) {$j'_k$};
\draw[dashed,->] (s1) edge[bend left=-30] ($(I.west)!0.15!(I.east)$);

\end{tikzpicture}
\caption{Illustration of the contents of the data structure for the interval $I=[a,b]$. The black dots are the sampled positions in $I_\text{right}$, and each such position has a pointer to an index $j'_k \in I_\text{left}$.\label{fig:deterministicds}}
\end{figure}
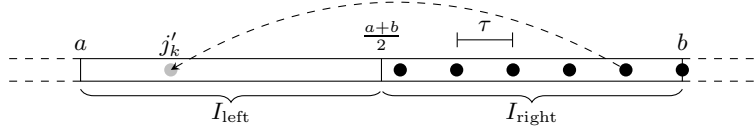

For every sampled position $b-k\tau \in I_\text{right}$, $k=0,1,\ldots,(|I|/2)/\tau$, we store the index $j'_k$ of the suffix starting in $I_\text{left}$ that achieves the maximum LCE value with the suffix starting at the sampled position, i.e., $T[b-k\tau...]$ (ties broken arbitrarily). Along with $j'_k$, we also store the value of the LCE between suffix $T[j'_k...]$ and $T[b-k\tau...]$. Formally, $j'_k$ and $L_k$ are defined as follows,

\[
j'_k = \displaystyle\argmax_{\;h \,\in \; I_\text{left}} \LCE(h,b-k\tau) 
\quad \text{and} \quad L_k = \LCE(j'_k,b-k\tau) \; .
\]

\subsubsection{Building the structure} We construct the above data structure for the interval $[1,n]$, and build it recursively for $[1,n/2]$ and $(n/2,n]$, stopping when the length of the interval becomes smaller than $\tau$.

\subsection{Answering a Query}
We now describe how to reduce a query $\LCE(i,j)$ where $i,j \in I$ to one where both indices are in either $I_\text{left}$ or $I_\text{right}$. Suppose without loss of generality that $i \in I_\text{left}$ and $j \in I_\text{right}$. We start by comparing $\delta < \tau$ pairs of characters of $T$, starting with $T[i]=T[j]$, until 1) we encounter a mismatch, 2) both positions are in $I_\text{right}$ or 3) we reach a sampled position in $I_\text{right}$. It suffices to describe the last case, in which $T[i,i+\delta] = T[j,j+\delta]$, $i+\delta \in I_\text{left}$ and $j+\delta = b-k\tau \in I_\text{right}$ for some $k$. Then by \autoref{obs:lcereduction}, we have that 
\begin{align*}
\LCE(i,j) &= \delta + \LCE(i+\delta,j+\delta) \\
&= \delta + \min(\LCE(i+\delta,j'_k),\LCE(j'_k,b-k\tau)) \\
&= \delta + \min(\LCE(i+\delta,j'_k),L_k) \; .
\end{align*}
Thus, we have reduced the original query to computing the query $\LCE(i+\delta,j'_k)$ in which both indices are in $I_\text{left}$.

\subsection{Analysis}
Each round takes $\Oh(\tau)$ time and halves the upper bound for $|i-j|$, which initially is $n$. Thus, after $\Oh(\tau \log (n/\tau))$ time, the initial LCE query has been reduced to one where $|i-j| \leq \tau$. At each of the $\Oh(\log(n/\tau))$ levels, the number of sampled positions is $(n/2)/\tau$, so the total space used is $\Oh((n/\tau)\log(n/\tau))$.

\subsection{Queries with Nearby Indices}\label{sec:nearbyindices}

We now describe the data structure used to answer a query $\LCE(i,j)$ when $|i-j| \leq \tau$. We first give some necessary definitions and properties of periodic strings. We say that the integer $1 \leq p \leq |S|$ is a \emph{period} of a string $S$ if any two characters that are $p$ positions apart in $S$ match, i.e., $S[i]=S[i+p]$ for all positions $i$ s.t. $1 \leq i < i+p \leq |S|$. The following is a well-known property of periods.

\begin{lemma}[Fine and Wilf~\cite{WF}]\label{lem:periods}
If a string $S$ has periods $a$ and $b$ and $|S| \geq |a|+|b|-\gcd(a,b)$, then $\gcd(a,b)$ is also a period of $S$.
\end{lemma}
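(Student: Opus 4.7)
I would prove this by strong induction on $a + b$, mirroring the Euclidean algorithm. Let $d = \gcd(a,b)$ and assume without loss of generality that $a \geq b$. For the base case $a = b$, we have $d = a = b$, which is a period by hypothesis. For the inductive step with $a > b$, the goal is to first establish that $a - b$ is also a period of $S$. Once that is done, the inductive hypothesis applies to the pair $(a - b, b)$, since $(a-b) + b = a < a + b$, $\gcd(a-b, b) = d$, and the length requirement $|S| \geq (a-b) + b - d = a - d$ is weaker than the given $|S| \geq a + b - d$ (using $b \geq d$). This yields $d$ as a period and closes the induction.

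To show $a - b$ is a period, I would argue that $S[i] = S[i + (a-b)]$ for every $1 \leq i \leq |S| - (a-b)$ by chaining the two given periods in a short walk. If $i + a \leq |S|$, then applying period $a$ forward and then period $b$ backward gives $S[i] = S[i+a] = S[i + (a-b)]$. Symmetrically, if $i > b$, applying period $b$ backward and then period $a$ forward gives $S[i] = S[i - b] = S[i + (a-b)]$. Between them, these two sub-cases cover every $i$ whenever $|S| \geq a + b$.

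The main obstacle is the gap range $|S| - a + 1 \leq i \leq b$, which is non-empty precisely when $|S| < a + b$, and hence (given $|S| \geq a + b - d$) only when $d \geq 2$. In this range neither of the two simple two-step walks fits inside $[1, |S|]$, so one must compose more moves. The quantitatively tight way to do this is via Bezout: since $d \mid (a - b)$ and $a > b$ we have $a \geq b + d$, so $|S| \geq a + b - d \geq 2b$, which leaves enough room to reach $i + b$ first by period $b$ and then close the remaining shift of $a - 2b$ by a greedy alternating sequence of $+a$ and $-b$ steps. The length bound $|S| \geq a + b - d$ is exactly what guarantees that such a walk stays inside $[1, |S|]$; the $-\gcd(a,b)$ slack in the hypothesis is precisely what is needed to cover this gap.
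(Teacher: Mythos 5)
The paper itself gives no proof of this statement --- it is the classical periodicity lemma of Fine and Wilf, cited directly from~\cite{WF} --- so your argument has to stand on its own, and as written it has a genuine hole at its crucial step. The Euclidean induction skeleton is fine (the base case, and the application of the induction hypothesis to the pair $(a-b,b)$ once $a-b$ is known to be a period), but the proof that $a-b$ is a period of $S$ is incomplete exactly in the ``gap'' range $|S|-a<i\leq b$ that you identify. First, your claim that this gap arises only when $d=\gcd(a,b)\geq 2$ is false: with $d=1$ the hypothesis permits $|S|=a+b-1<a+b$, so the gap is nonempty. Concretely, take $a=5$, $b=3$, $|S|=7=a+b-d$; the gap is $\{3\}$. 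Second, the walk you prescribe for the gap does not exist in this example: from $i=3$ you step $+b$ to position $6$, and then every sequence of $+a$ and $-b$ moves gets stuck ($6+a=11>7$, $6-b=3$, $3+a=8>7$, $3-b=0<1$), even though $S[3]=S[5]$ does hold --- but only via a walk that also uses $-a$ and $+b$ moves, e.g.\ $3\to 6\to 1\to 4\to 7\to 2\to 5$. So the assertion that ``the length bound $|S|\geq a+b-d$ is exactly what guarantees that such a walk stays inside $[1,|S|]$'' is not only unproved, it is false for the walk shape you describe; and in general, proving that some in-range walk with the required net displacement exists is precisely the combinatorial content of Fine--Wilf. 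Bezout only gives the net displacement as an integer combination, not a walk confined to $[1,|S|]$.

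A standard way to repair the induction is to avoid the gap altogether: apply the induction hypothesis not to $S$ but to the prefix $u=S[1\ldots|S|-b]$, which has period $b$ and (by the same two-step argument, now with no gap) period $a-b$, and whose length is at least $a-d=(a-b)+b-\gcd(a-b,b)$; conclude that $u$ has period $d$, and then propagate the period $d$ from $u$ to all of $S$ using the period $b$ of $S$ together with $d\mid b$. Alternatively, reduce to the coprime case $d=1$ by splitting $S$ into the $d$ subsequences of positions in a fixed residue class modulo $d$ (each has periods $a/d$ and $b/d$ and length at least $a/d+b/d-1$), and prove the coprime case by a counting or connectivity argument on positions. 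Either route supplies the argument for the problematic positions that your current proposal only asserts.
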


\noindent \emph{The period} of $S$ is the smallest period of $S$ and we denote it by $\per(S)$. If $\per(S) \leq |S|/2$, we say $S$ is \emph{periodic}. A periodic string $S$ might have many periods smaller than $|S|/2$, however it follows from the above lemma that

\begin{corollary}\label{cor:periods}
All periods smaller than $|S|/2$ are multiples of $\per(S)$.
\end{corollary}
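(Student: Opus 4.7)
The plan is to show that any period $q$ of $S$ with $q < |S|/2$ is divisible by $p := \per(S)$, by appealing directly to the Fine--Wilf lemma (\autoref{lem:periods}). First I would set up the hypotheses: by definition $p$ is a period of $S$ and since $S$ is periodic we also have $p \leq |S|/2$, while $q$ is a period with $q < |S|/2$. Without loss of generality $p \leq q$, because $p$ is the smallest period of $S$.

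Next, I would verify that the length condition of \autoref{lem:periods} is satisfied for the pair $(p,q)$. Since $p \leq |S|/2$ and $q < |S|/2$, we get
\[
p + q - \gcd(p,q) \;\leq\; p + q \;\leq\; |S|,
\]
so \autoref{lem:periods} applies and $\gcd(p,q)$ is itself a period of $S$.

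Finally, I would derive the conclusion from the minimality of $p$: since $\gcd(p,q) \leq p$ and $p$ is the smallest period of $S$, we must have $\gcd(p,q) = p$, which is exactly the statement that $p$ divides $q$. There is no real obstacle here beyond checking the Fine--Wilf precondition carefully; the mild subtlety is making sure that the strict inequality $q < |S|/2$ together with $p \leq |S|/2$ (rather than two strict inequalities, or two non-strict ones) suffices to validate the hypothesis $|S| \geq p + q - \gcd(p,q)$, which it does since the $-\gcd(p,q)$ term only helps.
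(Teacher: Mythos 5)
Your proof is correct and is exactly the deduction the paper intends: the corollary is stated as an immediate consequence of the Fine--Wilf lemma, and your application of it (check the length condition for $p=\per(S)$ and the given period $q<|S|/2$, then use minimality of $p$ to force $\gcd(p,q)=p$) is the standard argument. One tiny remark: $p\leq q$ is not a ``without loss of generality'' choice but holds automatically, since $p$ is the smallest period.
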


\subsubsection{The Data Structure}

Let $T_k = T[k\tau...(k+2)\tau-1]$ denote the substring of length $2\tau$ starting at position $k\tau$ in $T$, $k = 0,1,\ldots,n/\tau$. For the strings $T_k$ that are periodic, let $p_k = \per(T_k)$ be the period. For every periodic $T_k$, the data structure stores the length $\ell_k$ of the maximum substring starting at position $k\tau$, which has period $p_k$. Nothing is stored if $T_k$ is aperiodic.

\subsubsection{Answering a Query}
We may assume without loss of generality that $i=k\tau$, for some integer $k$. If not, then we check whether $T[i+\delta] = T[j+\delta]$ until $i+\delta=k\tau$. Hence, assume that $i=k\tau$ and $j=i+d$ for some $0 < d \leq \tau$. In $\Oh(\tau)$ time, we first check whether $T[i+\delta] = T[j+\delta]$ for all $\delta \in [0,2\tau]$. If we find a mismatch we are done, and otherwise we return $\LCE(i,j) = \ell_k - d$.

\subsubsection{Correctness}
If a mismatch is found when checking that $T[i+\delta] = T[j+\delta]$ for all $\delta \in [0,2\tau]$, the answer is clearly correct. Otherwise, we have established that $d \leq \tau$ is a period of $T_k$, so $T_k$ is periodic and $d$ is a multiple of $p_k$ (by \autoref{cor:periods}). Consequently, $T[i+\delta]=T[j+\delta]$ for all $\delta$ s.t. $d+\delta \leq \ell_k$, and thus $\LCE(i,j) = \ell_k - d$.

\subsection{Preprocessing}

The preprocessing of the data structure is split into the preprocessing of the recursive data structure for queries where $|i-j| > \tau$ and the preprocessing of the data structure for nearby indices, i.e., $|i-j| \leq \tau$.

If we allow $\tilde{\Oh}(n)$ space during the preprocessing phase then both can be solved in $\tilde{\Oh}(n)$ time. If we insist on $\tilde{\Oh}(n/\tau)$ space during preprocessing then the times are not as good.

\subsubsection{Recursive Data Structure}
Say $\tilde{\Oh}(n)$ space is allowed during preprocessing. Generate a 2D range searching environment with a point $(x(l),y(l))$ for every text index $l \in [1,n]$. Set $x(l)=l$ and $y(l)$ equal to the rank of suffix $T[l...]$ in the lexicographical sort of the suffixes. To compute $j'_k \in I_\text{left} = [a,(a+b)/2]$ that has the maximal LCE value with a sampled position $i=b-k\tau\in I_\text{right} = \; ((a+b)/2,b]$ we require two 3-sided range successor queries, see detailed definitions in~\cite{Lewenstein13}. The first is a range $[a,(a+b)/2] \times (-\infty,y(i)-1]$ which returns the point within the range with $y$-coordinate closest to, and less-than, $y(i)$. The latter is a range $[a,(a+b)/2] \times [y(i)+1,\infty)$ with point with $y$-coordinate closest to, and greater-than, $y(i)$ returned.

Let $l'$ and $l''$ be the $x$-coordinates, i.e., text indices, of the points returned by the queries. By definition both are in $[a,(a+b)/2]$. Among suffixes starting in $[a,(a+b)/2]$, $T[l'...]$ has the largest rank in the lexicographic sort which is less than $l$. Likewise, $T[l''...]$ has the smallest rank in the lexicographic sort which is greater than $l$. Hence, $j_k'$ is equal to either $l'$ or $l''$, and the larger of $\LCE(l,l')$ and $\LCE(l,l'')$ determines which one. 

The LCE computation can be implemented with standard suffix data structures in $\Oh(n)$ space and $\Oh(1)$ query time for an overall $\Oh(n)$ time. 3-sided range successor queries can be preprocessed in $\Oh(n\log n)$ space and time and queries can be answered in $\Oh(\log\log n)$ time, see~\cite{KKL07,LS94} for an overall $\Oh(n\log\log n)$ time and $\Oh(n\log n)$ space.

If one desires to keep the space limited to $\tilde{\Oh}(n/\tau)$ space then generate a sparse suffix array for the (evenly spaced) suffixes starting at $k\tau$ in $I_\text{right}$, $k = 0,1,\ldots,(|I|/2)/\tau$, see~\cite{KU96}. Now for every suffix in $I_\text{left}$ find its location in the sparse suffix array. From this data one can compute the desired answer. The time to compute the sparse suffix array is $\Oh(n)$ and the space is $\Oh(n/\tau)$. The time to search a suffix in $I_\text{left}$ is $\Oh(n + \log (|I|/\tau))$ using the classical suffix search in the suffix array \cite{MM93}. The time per $I$ is $\Oh(|I|n)$. Overall levels this yields $\Oh(n\cdot n + n\cdot (n/2)+ n\cdot (n/4) + \cdots + n \cdot 1) = \Oh(n^2)$ time.

\subsubsection{The Nearby Indices Data Structure} If $T_k$ is periodic then we need to find its period $p_k$ and find the length $\ell_k$ of the maximum substring starting at position $k\tau$, which has period $p_k$.

Finding period $p_k$, if it exists, in $T_k$ can be done in $\Oh(\tau)$ time and constant space using the algorithm of~\cite{BGM13}. The overall time is $\Oh(n)$ and the overall space is $\Oh(n/\tau)$ for the $p_k$'s.

To compute $\ell_k$ we check whether the period $p_k$ of $T_k$ extends to $T_{k+2}$. That is whether $p_k$ is the period of $T[k\tau...(k+4)\tau-1]$. The following is why we do so.

\begin{lemma} Suppose $T_k$ and $T_{k+2}$ are periodic with periods $p_k$ and $p_{k+2}$, respectively. If $T_k\cdot T_{k+2}$ is periodic with a period of length at most $\tau$ then $p_k$ is the period and $p_{k+2}$ is a rotation of $p_k$.
\end{lemma}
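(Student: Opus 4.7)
The plan is to apply Fine-Wilf (Lemma~1) to each half of the concatenation and then lift the result to the full string via the structure of the length-$q$ period block at the front. Throughout, write $v_k$ for the length-$p_k$ period string of $T_k$.

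Let $q \le \tau$ be the given period of $T_k \cdot T_{k+2}$. Since $q$ is also a period of the prefix $T_k$ of length $2\tau$, and $q + p_k - \gcd(q,p_k) \le 2\tau$, Fine-Wilf shows that $\gcd(q,p_k)$ is a period of $T_k$, which by minimality of $p_k$ forces $p_k \mid q$; symmetrically $p_{k+2} \mid q$. I now upgrade this to show that $p_k$ itself is a period of $T_k \cdot T_{k+2}$. Let $w = T[k\tau \ldots k\tau + q - 1]$ be the length-$q$ prefix of the concatenation. By definition of period $q$, the concatenation is a prefix of $w^\infty$. But $w$ is also a prefix of $T_k$, so it inherits period $p_k$, and $p_k \mid q$ forces $w = v_k^{q/p_k}$. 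Hence $w^\infty = v_k^\infty$, so $p_k$ is a period of the whole concatenation, and it must be the smallest such, since any shorter period would restrict to a period of $T_k$ below $p_k$.

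For the rotation claim, the length-$p_k$ period string of $T_k \cdot T_{k+2}$ is $v_k$, so $T_{k+2}$, which starts at offset $2\tau$, is a prefix of $(u')^\infty$ where $u'$ is the cyclic rotation of $v_k$ by $2\tau \bmod p_k$ positions; in particular $p_k$ is a period of $T_{k+2}$. Fine-Wilf inside $T_{k+2}$ with periods $p_k$ and $p_{k+2}$ then yields $p_{k+2} \mid p_k$. Conversely, $u'$ inherits period $p_{k+2}$ from $T_{k+2}$, so it is a $(p_k/p_{k+2})$-th power of some length-$p_{k+2}$ string; rotating back, $v_k$ is also such a power, which makes $p_{k+2}$ a period of the concatenation. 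Minimality of $p_k$ then gives $p_k \le p_{k+2}$, hence $p_k = p_{k+2}$ and $v_{k+2} = u'$, the promised rotation of $v_k$.

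The main obstacle is the middle paragraph, promoting the weak hypothesis "some $q \le \tau$ is a period" of the concatenation into the strong conclusion "$p_k$ itself is a period." The argument hinges on observing that the length-$q$ prefix is forced to be a power of $v_k$, which is exactly what the divisibility $p_k \mid q$ buys.
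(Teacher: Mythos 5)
Your proof is correct and rests on the same core step as the paper's: restrict the assumed period $q\le\tau$ of $T_k\cdot T_{k+2}$ to the length-$2\tau$ prefix $T_k$ and invoke Fine--Wilf (the paper uses its Corollary~2) to force $p_k\mid q$. You argue directly rather than by contradiction and spell out the parts the paper leaves implicit --- lifting $p_k$ to a period of the whole concatenation via $w=v_k^{q/p_k}$, and the explicit rotation argument that the paper dispatches with ``deduced similarly'' --- and these added steps are sound.
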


\begin{proof}
To reach a contradition, suppose that $p_k$ is not the period of $T_k\cdot T_{k+2}$, i.e., it has period $p'$ of length at most $\tau$. Since $T_k$ is of size $2\tau$, $p'$ must also be a period of $T_k$. However, $p_k$ and $p'$ are both prefixes of $T_k$ and, hence, must have different lengths. By \autoref{cor:periods} $|p'|$ must be a multiple of $|p_k|$ and, hence, not a period. The fact that $p_{k+2}$ is a rotation of $p_k$ can be deduced similarly.
\qed
\end{proof}

By induction it follows that $T_k\cdot T_{k+2}\cdot T_{k+4}\cdot \hdots\cdot T_{k+2M}$, $M \geq 1$, is periodic with $p_k$ if each concatenated pair of $T_{k+2i}\cdot T_{k+2(i+1)}$ is periodic with a period at most $\tau$. This suggests the following scheme. Generate a binary array of length $n/\tau$ where location $j$ is 1 iff $T_{k+2(j-1)}\cdot T_{k+2j}$ has a period of length of at most $\tau$. In all places where there is a 0, i.e., $T_k \cdot T_{k+2}$ does not have a period of length at most $\tau$ and $T_k$ does have period $p_k$, we have that $\ell_k \in [2\tau,4\tau-1]$. We directly compute $\ell_k$ by extending $p_k$ in $T_k \cdot T_{k+2}$ as far as possible. Now, using the described array and the $\ell_k \in [2\tau,4\tau-1]$, in a single sweep for the even $k\tau$'s (and a seperate one for the odd ones) from right to left we can compute all $\ell_k$'s.

It is easy to verify that the overall time is  $\Oh(n)$ and the space is $\Oh(n/\tau)$.


\section{Randomized Trade-Offs}\label{sec:randomized}
In this section we describe a randomized LCE data structure using $\Oh(n/\tau)$ space with $\Oh(\tau+\log \frac{\ell}{\tau})$ query time. In \autoref{sec:differencecoverds} we describe another $\Oh(n/\tau)$-space LCE data structure that either answers an LCE query in constant time, or provides a certificate that $\ell \leq \tau^2$. Combining the two data structures, shows that the LCE problem can be solved in $\Oh(n/\tau)$ space and $\Oh(\tau)$ time.

The randomization comes from our use of Karp-Rabin fingerprints~\cite{KR1987} for comparing substrings of $T$ for equality. Before describing the data structure, we start by briefly recapping the most important definitions and properties of Karp-Rabin fingerprints.

\subsection{Karp-Rabin Fingerprints}


For a prime $p$ and $x \in [p]$ the Karp-Rabin fingerprint~\cite{KR1987},
denoted $\phi_{p,x}(T[i...j])$, of the substring $T[i...j]$ is defined as
\[
	\phi_{p,x}(T[i...j]) = \sum_{i \le k \leq j} T[k]x^{k-i}  \bmod p \; .
\]
If $T[i...j]=T[i'...j']$ then clearly $\phi_{p,x}(T[i...j])=\phi_{p,x}(T[i'...j'])$.
In the Monte Carlo and the Las Vegas algorithms we present we will choose $p$ such
that $p = \Theta(n^{4+c})$ for some constant $c > 0$ and $x$ uniformly from
$[p] \backslash \set{0}$. In this case a simple union bound shows that the converse is also true with high probability, i.e., $\phi$ is \emph{collision-free} on all substring pairs of $T$ with probability at least $1-n^{-c}$. Storing a fingerprint requires $\Oh(1)$ space. When $p,x$ are clear from the context we write $\phi = \phi_{p,x}$.

For shorthand we write $f(i) = \phi(T[1,i]), i \in [1,n]$ for the fingerprint of the $i^\text{th}$ prefix of $T$. Assuming that we store the exponent $x^i \mod p$ along with the fingerprint $f(i)$, the following two properties of fingerprints are well-known and easy to show.	

\begin{lemma}\label{lem:fp}
1) Given $f(i)$, the fingerprint $f(i \pm a)$ for some integer $a$, can be computed in $\Oh(a)$ time. 2) Given fingerprints $f(i)$ and $f(j)$, the fingerprint $\phi(T[i..j])$ can be computed in $\Oh(1)$ time.
\end{lemma}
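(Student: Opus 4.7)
The plan is to verify both parts directly from the definition $f(i) = \sum_{k=1}^{i} T[k]\,x^{k-1} \bmod p$ and the one-step recurrence $f(k+1) = f(k) + T[k+1]\,x^{k} \bmod p$. Throughout, I will assume the single modular constant $x^{-1} \bmod p$ has been precomputed once as a global value of $\Oh(1)$ extra storage.

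For part 1, I would walk from $k=i$ toward $k = i\pm a$ one index at a time, maintaining the pair $(f(k),\, x^k \bmod p)$. Moving up by one uses the recurrence above together with $x^{k+1} = x \cdot x^{k}$; moving down by one uses $f(k-1) = f(k) - T[k]\,x^{k-1}$ together with $x^{k-1} = x^{-1}\cdot x^{k}$. Each step performs $\Oh(1)$ arithmetic operations modulo $p$ on word-sized values, so $a$ steps cost $\Oh(a)$ time in total. Since the moves are symmetric this handles both signs of $a$ uniformly.

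For part 2, I would first establish the key identity $f(j) - f(i-1) = \sum_{k=i}^{j} T[k]\,x^{k-1} = x^{i-1}\cdot \phi(T[i..j]) \bmod p$, which rearranges to $\phi(T[i..j]) = (f(j) - f(i-1))\cdot x^{-(i-1)} \bmod p$. The value $f(i-1)$ is obtained from $f(i)$ by one downward step of part 1, using the stored $x^i$ and global $x^{-1}$, so all remaining work is a single subtraction and a single multiplication modulo $p$.

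The only real subtlety is the availability of $x^{-(i-1)} \bmod p$; this is the main obstacle to getting literal $\Oh(1)$ time from the statement as written. I would resolve it cleanly by storing $x^{-i} \bmod p$ alongside each retained pair $(f(i), x^i \bmod p)$, doubling the auxiliary storage by a constant factor. A more elegant observation is that every use of these fingerprints in the rest of the paper compares $\phi$-values of substrings of equal length against each other, so the common factor $x^{i-1}$ can be absorbed into a symmetric cross-multiplication, and the inverse is never actually required in practice.
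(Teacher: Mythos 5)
Your proof is correct and is essentially the argument the paper has in mind: the paper gives no explicit proof, stating the lemma is ``well-known and easy to show'' once $x^i \bmod p$ is stored alongside $f(i)$, and your step-by-step recurrence for part 1 and prefix-difference identity for part 2 are exactly that standard argument. Your extra care about $x^{-(i-1)} \bmod p$ --- either storing $x^{-i} \bmod p$ with each sampled fingerprint or avoiding inverses altogether by cross-multiplying in the equal-length comparisons the algorithms actually perform --- is a legitimate detail the paper glosses over, and both fixes preserve the stated $\Oh(a)$ and $\Oh(1)$ bounds.
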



\noindent In particular this implies that for a fixed length $l$, the fingerprint of all substrings of length $l$ of $T$ can be enumerated in $\Oh(n)$ time using a sliding window.

\subsection{Overview}
The main idea in our solution is to binary search for the $\LCE(i,j)$ value using Karp-Rabin fingerprints. Suppose for instance that $\phi(T[i,i+M]) \neq \phi(T[j,j+M])$ for some integer $M$, then we know that $\LCE(i,j) \leq M$, and thus we can find the true $\LCE(i,j)$ value by comparing $\log(M)$ additional pair of fingerprints. The challenge is to obtain the fingerprints quickly when we are only allowed to use $\Oh(n/\tau)$ space. We will partition the input string $T$ into $n/\tau$ \emph{blocks} each of length $\tau$. Within each block we sample a number of equally spaced positions. The data structure consists of the fingerprints of the prefixes of $T$ that ends at the sampled positions, i.e., we store $f(i)$ for all sampled positions $i$. In total we sample $\Oh(n/\tau)$ positions. If we just sampled a single position in each block (similar to the approach in \cite{BilleLCE}), we could compute the fingerprint of any substring in $\Oh(\tau)$ time (see \autoref{lem:fp}), and the above binary search algorithm would take time $\Oh(\tau \log n)$ time. 
We present a new sampling technique that only samples an additional $\Oh(n/\tau)$ positions, while improving the query time to $\Oh(\tau + \log (\ell/\tau))$.

\subsubsection{Preliminary definitions} We partition the input string $T$ into $n/\tau$ blocks of $\tau$ positions, and by \emph{block $k$} we refer to the positions $[k\tau,k\tau+\tau)$, for $k \in [n/\tau]$.

We assume without loss of generality that $n$ and $\tau$ are both powers of two. Every position $q \in [1,n]$ can be represented as a bit string of length $\lg n$. Let $q \in [1,n]$ and consider the binary representation of $q$. We define the leftmost $\lg(n/\tau)$ bits and rightmost $\lg(\tau)$ bits to be the \emph{head}, denoted $h(q)$ and the \emph{tail}, denoted $t(q)$, respectively. A position is \emph{block aligned} if $t(q) = 0$. The \emph{significance} of $q$, denoted $s(q)$, is the number of trailing zeros in $h(q)$.
Note that the $\tau$ positions in any fixed block $k \in [n/\tau]$ all have the same head, and thus also the same significance, which we denote by $\mu_k$. See \autoref{fig:headtailsignifiance}.

\begin{figure}
\centering
\begin{tikzpicture}[xscale=0.3,brc/.style={decorate,decoration={brace,amplitude=5pt,raise=0pt}}]

\foreach[count=\i] \x in {0,1,1,0,0,1,0,0,0,0,0,1,1,1,0,1,0,1,1} {
	\node[inner sep=1pt] (bit\i) at (\i,0) {\x};
}
\node at (-1,0) {$q$};
\draw [brc] (bit1.north west) -- (bit11.north east) node[midway,above=5pt] {\footnotesize $h(q)$};
\draw [brc] (bit12.north west) -- (bit19.north east) node[midway,above=5pt] {\footnotesize $t(q)$};
\draw [|-|] ($(bit11.south east)+(0,-0.2)$) -- ($(bit7.south west)+(0,-0.2)$) node[midway,fill=white,inner sep=1pt] {\footnotesize $s(q)$};

\end{tikzpicture}
\caption{Example of the definitions for the position $q=205035$ in a string of length $n=2^{19}$ with block length $\tau=2^8$. Here $h(q)$ is the first $\lg(n/\tau)=11$ bits, and $t(q)$ is the last $\lg(\tau)=8$ bits in the binary representation of $q$. The significance is $s(q)=5$.\label{fig:headtailsignifiance}}
\end{figure}

\subsection{The Monte Carlo Data Structure}
The data structure consists of the values $f(i)$, $i \in \mathcal S$, for a specific set of sampled positions $\mathcal S \subseteq [1,n]$, along with the information necessary in order to look up the values in constant time. We now explain how to construct the set $\mathcal S$. In block $k \in [n/\tau]$ we will sample $b_k = \min\set{2^{\floor{\mu_k/2}},\tau}$ evenly spaced positions, where $\mu_k$ is the significance of the positions in block $k$, i.e., $\mu_k = s(k\tau)$. More precisely, in block $k$ we sample the positions
$\mathcal B_k = \set{ k\tau + j \tau / b_k \mid j \in [b_k]}$, and let $\mathcal S = \mathop{\cup}_{k \in [n/\tau]} \mathcal B_k$. See \autoref{fig:randomizedds}.


We now bound the size of $\mathcal S$. The significance of a block is at most $\lg(n/\tau)$, and there are exactly $2^{\lg(n/\tau)-\mu}$ blocks with significance $\mu$, so
\[
|\mathcal S| = \sum_{k=0}^{n/\tau-1} b_k \leq \sum_{\mu=0}^{\lg(n/\tau)} 2^{\lg(n/\tau)-\mu} 2^{\floor{\mu/2}} \leq \frac{n}{\tau} \sum_{\mu=0}^{\infty} 2^{-\mu/2} = \left( 2+\sqrt{2} \right)
	\frac{n}{\tau} 
	= \Oh \left ( \frac{n}{\tau} \right ) .
\]

\begin{figure}
\centering
\begin{tikzpicture}[font=\scriptsize,xscale=0.65]
\def\STRWIDTH{16cm}
\def\STRHEIGHT{0.3cm}
\node[xscale=0.65,draw,rectangle,minimum height=\STRHEIGHT,minimum width=\STRWIDTH,outer sep=0pt] (I) at (0,0) {};

\foreach[count=\i,evaluate=\x as \k using int(\x/2)] \x in {4,0,1,0,2,0,1,0,3,0,1,0,2,0,1,0} {
	\pgfmathsetmacro{\imo}{int(\i-1)}
	\coordinate (Btop\i) at ($(I.north west)+(\i,0)$);
	\draw (Btop\i) -- ($(Btop\i)+(0,-\STRHEIGHT)$);
	\node at ($(Btop\i)+(-0.5,1.1)$) {\imo};
	\node at ($(Btop\i)+(-0.5,0.7)$) {\x};
	\pgfmathsetmacro{\SAMP}{int(pow(2,\k)}
	\node at ($(Btop\i)+(-0.5,0.3)$) {\SAMP};
	\foreach \y in {1,...,\SAMP} {
		\pgfmathsetmacro{\j}{\y-1}
		\node[circle,fill,minimum size=3pt,inner sep=0pt] at ($(I.west)+(\i-1+\j/\SAMP,0)$) {};
	}
}s
\node at ($(Btop1)+(-2,1.1)$) {$k$};
\node at ($(Btop1)+(-2,0.7)$) {$\mu_k$};
\node at ($(Btop1)+(-2,0.3)$) {$b_k$};
\node at ($(I.west)-(1,0)$) {$T$};

\end{tikzpicture}
\caption{Illustration of a string $T$ partitioned into 16 blocks each of length $\tau$. The significance $\mu_k$ for the positions in each block $k \in [n/\tau]$ is shown, as well as the $b_k$ values. The block dots are the sampled positions $\mathcal S$.\label{fig:randomizedds}}
\end{figure}

\subsection{Answering a query}
We now describe how to answer an $\LCE(i,j)$ query. We will assume that $i$ is block aligned, i.e., $i=k\tau$ for some $k \in [n/\tau]$. Note that we can always obtain this situation in $\Oh(\tau)$ time by initially comparing at most $\tau-1$ pairs of characters of the input string directly.

\autoref{alg:query} shows the query algorithm. It performs an exponential search to locate the block in which the first mismatch occurs, after which it scans the block directly to locate the mismatch. The search is performed by calls to $\texttt{check}(i,j,c)$, which computes and compares $\phi(T[i...i+c])$ and $\phi(T[j...j+c])$. In other words, assuming that $\phi$ is collision-free, $\texttt{check}(i,j,c)$ returns \texttt{true} if $\LCE(i,j) \geq c$ and \texttt{false} otherwise.

\begin{algorithm}
\caption{Computing the answer to a query $\LCE(i,j)$}\label{alg:query}
\begin{algorithmic}[1]
\Procedure{\LCE}{$i,j$}
\State $\hat\ell \gets 0$
\State $\mu \gets 0$
\While{\texttt{check}$(i,j,2^\mu\tau$)} \Comment{Compute an interval such that $\ell \in [\hat\ell,2\hat\ell]$.}
\State $(i,j,\hat\ell) \gets (i+2^\mu\tau, j+2^\mu\tau, \hat\ell + 2^\mu\tau)$
\If {$s(j) > \mu$}
\State $\mu \gets \mu+1$
\EndIf
\EndWhile
\While{$\mu > 0$} \Comment{Identify the block in which the first mismatch occurs}
\If {\texttt{check}$(i,j,2^{\mu-1}\tau$)}
\State $(i,j,\hat\ell) \gets (i+2^{\mu-1}\tau, j+2^{\mu-1}\tau, \hat\ell + 2^{\mu-1}\tau)$
\EndIf
\State $\mu \gets \mu-1$
\EndWhile
\While{$T[i] = T[j]$} \Comment{Scan the final block left to right to find the mismatch}
\State $(i,j,\hat\ell) \gets (i+1,j+1,\hat\ell+1)$
\EndWhile
\State \Return $\hat\ell$
\EndProcedure
\end{algorithmic}
\end{algorithm}

\subsubsection{Analysis}
We now prove that \autoref{alg:query} correctly computes $\ell=\LCE(i,j)$ in $\Oh(\tau + \log (\ell/\tau))$ time. The algorithm is correct assuming that \texttt{check}$(i,j,2^\mu\tau)$ always returns the correct answer, which will be the case if $\phi$ is collision-free. 

The following is the key lemma we need to bound the time complexity.

\begin{lemma}\label{lem:algorithm}
Throughout \autoref{alg:query} it holds that $\ell \ge \left ( 2^\mu - 1 \right )\tau$,
$s(j) \geq \mu$, and $\mu$ is increased in at least every second iteration of the first
\emph{\textbf{while}}-loop.
\end{lemma}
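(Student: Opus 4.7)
The plan is to prove all three invariants simultaneously by induction on the number of iterations of the first \textbf{while}-loop completed. Initially $\hat\ell = 0$ and $\mu = 0$, so $\ell \geq 0 = (2^0-1)\tau$ is immediate, $s(j) \geq 0$ is trivial, and the increment statement is vacuous. I would maintain the invariants at the point just before the loop condition is tested.

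The crux of the inductive step is a bit-level observation about the update $j \mapsto j + 2^\mu\tau$. Since $\tau$ is a power of two and $j = h(j)\cdot\tau + t(j)$, the update sends $h(j)$ to $h(j)+2^\mu$ without touching the tail. Under the inductive hypothesis $s(j) \geq \mu$, bit $\mu$ of $h(j)$ is either $0$ (when $s(j) > \mu$) or $1$ (when $s(j) = \mu$). I would then split into two cases: if $s(j) > \mu$, adding $2^\mu$ just flips bit $\mu$ to $1$, and the new significance is exactly $\mu$; if $s(j) = \mu$, a carry propagates through at least one further zero bit, raising the significance to at least $\mu+1$. In both cases the new significance is at least $\mu$, and the conditional $\mu \gets \mu+1$ in the algorithm fires precisely when the new significance strictly exceeds $\mu$, which restores $s(j) \geq \mu$ relative to the possibly updated $\mu$.

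The increment claim drops out of the same dichotomy. If $\mu$ is not incremented in some iteration, then we were in the case $s(j_{\text{old}}) > \mu$, which forces $s(j_{\text{new}}) = \mu$. At the start of the next iteration we are therefore in the complementary case $s(j) = \mu$, which by the analysis above triggers the increment. Hence $\mu$ cannot remain fixed across two consecutive iterations.

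Finally, the first invariant follows as a corollary. Every iteration that enters the body does so because \texttt{check} succeeded, adding $2^\mu\tau$ to $\hat\ell$ while preserving $\hat\ell \leq \ell$. Because $\mu$ strictly increases at least once in every two iterations, by the time we begin an iteration with current value $\mu$ the loop has executed at least one body with level value $\mu'$ for each $\mu' \in \{0,1,\dots,\mu-1\}$, contributing $\sum_{\mu'=0}^{\mu-1} 2^{\mu'}\tau = (2^\mu-1)\tau$ to $\hat\ell$. Combining with $\ell \geq \hat\ell$ yields $\ell \geq (2^\mu-1)\tau$. The main obstacle is packaging the bit-arithmetic observation cleanly; everything else is essentially mechanical once that case analysis on bit $\mu$ of $h(j)$ is stated, so that is where I would concentrate the care.
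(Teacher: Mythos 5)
Your induction for the first \textbf{while}-loop is correct, and it is essentially the paper's own argument in more explicit clothing: your case analysis on bit $\mu$ of $h(j)$ (no carry when $s(j)>\mu$, carry when $s(j)=\mu$) is exactly the paper's use of $s(j+2^\mu\tau)\ge\min\{s(j),s(2^\mu\tau)\}$ together with the observation that $j/(2^\mu\tau)$ odd forces $(j+2^\mu\tau)/(2^\mu\tau)$ even; your summation $\sum_{\mu'<\mu}2^{\mu'}\tau=(2^\mu-1)\tau$ is the paper's induction via $(2^\mu-1)\tau+2^\mu\tau=(2^{\mu+1}-1)\tau$.

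However, there is a genuine gap in scope: the lemma asserts the invariants \emph{throughout} the algorithm, but you explicitly restrict the induction to iterations of the first \textbf{while}-loop and never touch the second one. The invariant $s(j)\ge\mu$ is not automatic there, since the second loop also modifies $j$ (by $j\gets j+2^{\mu-1}\tau$ when \texttt{check} succeeds) and decrements $\mu$, and this invariant is precisely what the subsequent complexity analysis uses to argue that $j$ and $j+2^{\mu}\tau$ are within $\tau/2^{\floor{\mu/2}}$ of a sampled position for the second loop's \texttt{check} calls as well. The missing piece is short and uses your own bit argument: if \texttt{check} fails, only $\mu$ decreases, which weakens both invariants, and they persist; if it succeeds, then since $s(j)\ge\mu$ bit $\mu-1$ of $h(j)$ is zero, so $s(j+2^{\mu-1}\tau)=\mu-1$ exactly, and the decrement $\mu\gets\mu-1$ at the end of the iteration restores $s(j)\ge\mu$ (the bound $\ell\ge(2^\mu-1)\tau$ is inherited trivially because $\mu$ only decreases after the first loop). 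Without this the lemma as stated, and hence the running-time analysis of the second loop, is not fully proved.
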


\begin{proof}
We first prove that $s(j) \geq \mu$. The claim holds initially. In the first loop $j$ is changed to $j+2^\mu\tau$, and $s(j+2^\mu \tau) \ge \min\set{s(j),s(2^\mu\tau)} = \min\set{s(j),\mu} = \mu$, where the last equality follows from the induction hypothesis $s(j) \geq \mu$. Moreover, $\mu$ is only incremented when $s(j) > \mu$. In the second loop $j$ is changed to $j+2^{\mu-1}\tau$, which under the assumption that $s(j) \geq \mu$, has significance $s(j+2^{\mu-1}\tau) = \mu-1$. Hence the invariant is restored when $\mu$ is decremented at line 11.

Now consider an iteration of the first loop where $\mu$ is not incremented, i.e., $s(j) = \mu$.
Then $\frac{j}{2^\mu \tau}$ is an odd integer, i.e. $\frac{j+2^\mu \tau}{2^\mu \tau}$ is even, and hence
$s(j+2^\mu \tau) > \mu$, so $\mu$ will be incremented in the next iteration of the loop.

In order to prove that $\ell \ge \left ( 2^\mu - 1 \right )\tau$ we will prove that 
$\hat{\ell} \ge \left ( 2^\mu - 1 \right )\tau$ in the first loop. This is trivial by induction using
the observation that $\left ( 2^\mu - 1 \right )\tau + 2^\mu \tau = \left ( 2^{\mu+1}-1 \right )\tau$.
\qed
\end{proof} 

\noindent Since $\ell \geq (2^{\mu}-1)\tau$ and $\mu$ is increased at least in every second iteration of the first loop and decreased in every iteration of the second loop, it follows that there are $\Oh(\log (\ell /\tau))$ iterations of the two first loops. The last loop takes $\Oh(\tau)$ time. It remains to prove that the time to evaluate the $\Oh(\log(\ell/\tau))$ calls to \texttt{check}$(i,j,2^\mu\tau)$ sums to $\Oh(\tau+\log(\ell/\tau))$.

Evaluating \texttt{check}$(i,j,2^\mu\tau)$ requires computing $\phi(T[i...i+2^\mu\tau])$ and $\phi(T[j...j+2^\mu\tau])$. The first fingerprint can be computed in constant time because $i$ and $i+2^\mu\tau$ are always block aligned (see \autoref{lem:fp}). The time to compute the second fingerprint depends on how far $j$ and $j+2^\mu \tau$ each are from a sampled position, which in turn depends inversely on the significance of the block containing those positions. By \autoref{lem:algorithm}, $\mu$ is always a lower bound on the significance of $j$, which implies that $\mu$ also lower bounds the significance of $j+2^\mu\tau$, and thus by the way we sample positions, neither will have distance more than $\tau/2^{\floor{\mu/2}}$ to a sampled position in $\mathcal S$. Finally, note that by the way $\mu$ is increased and decreased, \texttt{check}$(i,j,2^\mu\tau)$ is called at most three times for any fixed value of $\mu$. Hence, the total time to compute all necessary fingerprints can be bounded as
\[
\Oh \left( \sum_{\mu=0}^{\lg (\ell/\tau)} 1+\tau/2^{\floor{\mu/2}} \right) = \Oh(\tau + \log(\ell/\tau)) \; .
\]

\subsection{The Las Vegas Data Structure}\label{sec:lasvegas}
We now describe an $\Oh\bigl(n^{3/2}\bigr)$-time and $\Oh(n/\tau)$-space algorithm for verifiying that $\phi$ is \emph{collision-free} on all pairs of substrings of $T$ that the query algorithm compares. If a collision is found we pick a new $\phi$ and try again. With high probability we can find a collision-free $\phi$ in a constant number of trials, so we obtain the claimed Las Vegas data structure.

If $\tau \leq \sqrt{n}$ we use the verification algorithm of Bille~et~al.~\cite{BilleLCE}, using $\Oh(n\tau + n\log n)$ time and $\Oh(n/\tau)$ space. Otherwise, we use the simple $\Oh(n^2/\tau)$-time and $\Oh(n/\tau)$-space algorithm described below.

Recall that all fingerprint comparisions in our algorithm are of the form
\[
\phi\bigl(T[k\tau...k\tau+2^l\tau-1] \bigr) \stackrel{?}{=} \phi \bigl( T[j...j+2^l\tau-1] \bigr)
\]
for some $k \in [n/\tau], j \in [n], l \in [\log(n/\tau)]$. The algorithm checks each $l \in [\log (n/\tau)]$ separately. For a fixed $l$ it stores the fingerprints $\phi(T[k\tau...k\tau+2^l\tau])$ for all $k \in [n/\tau]$ in a hash table $\mathcal H$. This can be done in $\Oh(n)$ time and $\Oh(n/\tau)$ space. For every $j \in [n]$ the algorithm then checks whether $\phi \bigl( T[j...j+2^l\tau] \bigr) \in \mathcal H$, and if so, it verifies that the underlying two substrings are in fact the same by comparing them character by character in $\Oh(2^l\tau)$ time. By maintaining the fingerprint inside a sliding window of length $2^l\tau$, the verification time for a fixed $l$ becomes $\Oh(n 2^l \tau)$, i.e., $\Oh(n^2/\tau)$ time for all $l \in [\log (n/\tau)]$.


\subsection{Queries with Long LCEs}\label{sec:differencecoverds}
In this section we describe an $\Oh(\frac{n}{\tau})$ space data structure that in constant time either correctly computes $\LCE(i,j)$ or determines that $\LCE(i,j) \leq \tau^2$. The data structure can be constructed in $\Oh(n\log \frac{n}{\tau})$ time by a Monte Carlo or Las Vegas algorithm.

\subsubsection{The Data Structure}
Let $\mathcal S_\tau \subseteq [1,n]$ called the \emph{sampled positions} of $T$ (to be defined below), and consider the sets $A$ and $B$ of suffixes of $T$ and $T^R$, respectively.

 \[
 A=\{T[i...] \mid i \in \mathcal S_\tau\} \quad , \quad B=\{T[...i]^R \mid i \in \mathcal S_\tau\} \; .
 \]

\noindent We store a data structure for $A$ and $B$, that allows us to perform constant time longest common extension queries on any pair of suffixes in $A$ or any pair in $B$. This can be achieved by well-known techniques, e.g., storing a sparse suffix tree for $A$ and $B$, equipped with a nearest common ancestor data structure. To define $\mathcal S_\tau$, let $D_\tau = \{0,1,\ldots,\tau\} \cup \{2\tau, \ldots, (\tau-1)\tau\}$, then

\begin{equation}
\mathcal S_\tau = \{1 \leq i \leq n ~ \mid\ i\mod \tau^2 \in D_\tau \} \; .
\end{equation}

\subsubsection{Answering a Query}
To answer an LCE query, we need the following definitions. For $i,j \in \mathcal S_\tau$ let $\LCE_R(i,j)$ denote the longest common prefix of $T[...i]^R \in B$ and $T[...j]^R \in B$. Moreover, for $i,j \in [n]$, we define the function
\begin{equation}
 \delta(i,j) = \bigl(((i-j) \mod \tau) - i \bigr) \mod \tau^2\; .
\end{equation}
We will write $\delta$ instead of $\delta(i,j)$ when $i$ and $j$ are clear from the context.

The following lemma gives the key property that allows us to answer a query.
\begin{lemma}\label{lem:difcover}
For any $i,j \in [n-\tau^2]$, it holds that $i+\delta,j+\delta \in \mathcal S_\tau$.
\end{lemma}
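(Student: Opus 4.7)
The plan is to unpack the definition of $\delta$ and compute, separately, the residues of $i+\delta$ and $j+\delta$ modulo $\tau^2$, and then check that each residue lies in the set $D_\tau$. The range conditions ($i+\delta, j+\delta \in [1,n]$) are immediate from $\delta < \tau^2$ and $i,j \le n-\tau^2$, so the content of the lemma is really about residues.

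First I would introduce the abbreviation $r = (i-j) \bmod \tau$, so that $0 \le r \le \tau-1$ and $\delta \equiv r - i \pmod{\tau^2}$. Substituting, $i+\delta \equiv r \pmod{\tau^2}$. Since $r \in \{0,1,\ldots,\tau-1\} \subseteq \{0,1,\ldots,\tau\} \subseteq D_\tau$, this shows $i+\delta \in \mathcal{S}_\tau$.

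Next I would handle $j+\delta$. We have $j+\delta \equiv (j-i) + r \pmod{\tau^2}$. Reducing modulo $\tau$ and using $i-j \equiv r \pmod{\tau}$ by the choice of $r$, I get $(j-i)+r \equiv 0 \pmod{\tau}$, so $(j+\delta) \bmod \tau^2$ is a multiple of $\tau$ that lies in the range $[0, \tau^2)$, i.e.\ one of $0, \tau, 2\tau, \ldots, (\tau-1)\tau$. Now I would observe that the set $D_\tau = \{0,1,\ldots,\tau\} \cup \{2\tau, \ldots, (\tau-1)\tau\}$ contains every multiple of $\tau$ in $[0,\tau^2)$: the elements $0$ and $\tau$ lie in the first part, and $2\tau, 3\tau, \ldots, (\tau-1)\tau$ in the second. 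Hence $j+\delta \in \mathcal{S}_\tau$ as well.

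The proof is therefore essentially two modular computations plus checking that $D_\tau$ was designed to contain both $\{0,\ldots,\tau-1\}$ and all multiples of $\tau$ below $\tau^2$, which is exactly the difference-cover property needed to simultaneously shift two arbitrary positions into sampled positions. There is no real obstacle; the only mild subtlety is keeping the two $\bmod$ operations in $\delta$ straight and remembering that the residue of $i+\delta$ picks up the inner $(i-j)\bmod\tau$ while the residue of $j+\delta$ picks up a multiple of $\tau$ coming from $(j-i)+r$.
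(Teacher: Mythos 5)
Your proof is correct and follows the same route as the paper, which simply asserts by direct calculation that $(i+\delta) \bmod \tau^2 \leq \tau$ and $(j+\delta) \bmod \tau = 0$; you have spelled out exactly those two modular computations and the observation that $D_\tau$ contains $\{0,\ldots,\tau\}$ and all multiples of $\tau$ below $\tau^2$. No issues.
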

\begin{proof} Direct calculation shows that $(i+\delta) \mod \tau^2 \leq \tau$, and that $(j+\delta) \mod \tau = 0$, and thus by definition both $i+\delta$ and $j+\delta$ are in $\mathcal S_\tau$.
\end{proof}

To answer a query $\LCE(i,j)$, we first verify that $i,j \in [n - \tau^2]$ and that $\LCE_R(i+\delta,j+\delta) \geq \delta$. If this is not the case, we have established that $\LCE(i,j) \leq \delta < \tau^2$, and we stop. Otherwise, we return $\delta + \LCE(i+\delta,j+\delta)-1$.

\subsubsection{Analysis}
To prove the correctness, suppose $i,j \in [n-\tau^2]$ (if not clearly $\LCE(i,j) < \tau^2$) then we have that $i+\delta,j+\delta \in \mathcal S_\tau$ (\autoref{lem:difcover}). If $\LCE_R(i+\delta,j+\delta) \geq \delta$ it holds that $T[i...i+\delta] = T[j...j+\delta]$ so the algorithm correctly computes $\LCE(i,j)$ as $\delta+1+\LCE(i+\delta,j+\delta)$. Conversely, if $\LCE_R(i+\delta,j+\delta) < \delta$, $T[i...i+\delta] \neq T[j...j+\delta]$ it follows that $\LCE(i,j) < \delta < \tau^2$.

Query time is $\Oh(1)$, since computing $\delta$, $\LCE_R(i+\delta,j+\delta)$ and $\LCE(i+\delta,j+\delta)$ all takes constant time. Storing the data structures for $A$ and $B$ takes space $\Oh(|A|+|B|) = \Oh(|\mathcal S_\tau|) = \Oh(\frac{n}{\tau})$. For the preprocessing stage, we can use recent algorithms by I~et~al.~\cite{i_et_al:LIPIcs:2014:4473} for constructing the sparse suffix tree for $A$ and $B$ in $\Oh(\frac{n}{\tau})$ space. They provide a Monte Carlo algorithm using $\Oh(n \log \frac{n}{\tau})$ time (correct w.h.p.), and a Las Vegas algorithm using $\Oh(\frac{n}{\tau})$ time (w.h.p.).

\section{Derandomizing the Monte Carlo Data Structure}\label{sec:derandomization}
Here we give a general technique for derandomizing Karp-Rabin fingerprints, and apply it to our Monte Carlo algorithm. The main result is that for any constant $\eps > 0$, the data structure can be constructed completely deterministically in $\Oh(n^{2+\eps})$ time using $\Oh(n/\tau)$ space. Thus, compared to the probabilistic preprocessing of the Las Vegas structure using $\Oh(n^{3/2})$ time with high probability, it is relatively cheap to derandomize the data structure completely.

Our derandomizing technique is stated in the following lemma.

\begin{lemma}\label{lem:derandomization}
Let $A,L \subset \set {1,2,\ldots,n}$ be a set of positions and lengths 
respectively such that $\max(L) = n^{\Omega(1)}$.
For every $\eps \in (0,1)$, there exist a fingerprinting function
$\phi$ that can be evaluated in $\Oh \left ( \frac{1}{\eps} \right )$ time and
has the property that for all $a \in A, l \in L, i \in \set{1,2,\ldots,n}$:
\[
	\phi(T[a ... a +(l-1)]) = \phi(T[i ... i + (l-1)]) 
	\iff
	T[a ... a +(l-1)] = T[i ... i + (l-1)]
\]
We can find such a $\phi$ using $\Oh \left ( \frac{S}{\eps} \right )$ space and 
$
	\Oh \left (
		\frac{n^{1+\eps}
		\log n}{\eps^2}
		\frac{\abs{A}}{S}
		\max (L)
		\abs{L}
	\right )
$ time, for any value of $S \in [1,|A|]$.
\end{lemma}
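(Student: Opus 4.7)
The plan is to realise $\phi$ as the concatenation of $k = \Theta(1/\eps)$ Karp--Rabin fingerprints $\phi_{p_1,x_1}, \ldots, \phi_{p_k,x_k}$, each with a word-sized prime $p_i = \Theta(\max(L) \cdot n^\eps)$. Since $\max(L) = n^{\Omega(1)}$, such a prime is still polynomial in $n$ and can be located by brute search over an interval of that length. Given standard prefix-sum tables for each $\phi_{p_i,x_i}$, one evaluation of $\phi$ at a substring costs $O(k) = O(1/\eps)$, as required, so the real question is how to deterministically pick the $k$ pairs $(p_i, x_i)$ within the promised time and space budget.

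The driving principle is the classical polynomial-root bound: for distinct strings $u,v$ of length $l$, the equality $\phi_{p,x}(u) = \phi_{p,x}(v)$ can hold for at most $l - 1 \leq \max(L)$ values of $x \in \mathbb{Z}_p$. Call a triple $(a, i, l) \in A \times \set{1,\ldots,n} \times L$ \emph{bad} if $T[a...a+l-1] \neq T[i...i+l-1]$, and let $B_0$ collect all bad triples, so $|B_0| \leq |A| \cdot n \cdot |L|$. I would build the $(p_i, x_i)$ greedily, one round at a time. Having fixed $(p_1,x_1), \ldots, (p_{i-1},x_{i-1})$, let $B_{i-1} \subseteq B_0$ be those bad triples that collide under every $\phi_{p_j, x_j}$ with $j < i$. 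An averaging argument over $x \in \mathbb{Z}_{p_i}$ produces an $x_i$ with $|B_i| \leq |B_{i-1}| \cdot \max(L)/p_i$, hence $|B_k| \leq |B_0| (\max(L)/p)^k < 1$ once $(p/\max(L))^k > |B_0|$. The chosen size of $p$ ensures this happens after $k$ rounds, so $B_k = \emptyset$ and $\phi$ is collision-free on the pairs required by the lemma.

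To execute round $i$ deterministically, for each candidate $x \in \mathbb{Z}_{p_i}$ I would count the surviving triples and keep the $x$ attaining the minimum. The count is computed one length class at a time. Fix $l \in L$ and a batch of $S$ positions of $A$. A first length-$l$ Karp--Rabin sliding sweep over $T$ produces, for every $a$ in the batch, the $i$-tuple $\bigl(\phi_{p_1,x_1}(T[a...a+l-1]), \ldots, \phi_{p_{i-1},x_{i-1}}(T[a...a+l-1]), \phi_{p_i, x}(T[a...a+l-1])\bigr)$; these tuples are inserted into a hash table keyed by the whole tuple, storing only a \emph{count} per bucket rather than the list of contributing $a$'s. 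A second sweep of the same window yields the corresponding tuple at every text position $j \in \set{1,\ldots,n}$, and one hash probe adds that bucket's count to a global tally. Because each sweep maintains $O(k)$ Karp--Rabin fingerprints in parallel and each hash key has $O(k)$ words, the per-step cost is $O(k)$ and a sweep costs $O(nk)$. Multiplying $k$ rounds by $p$ candidates for $x$ by $|L|$ lengths by $|A|/S$ batches by $O(nk)$ per sweep gives $O\left(k^2 \cdot p \cdot |L| \cdot n \cdot |A|/S\right) = O\left(\frac{n^{1+\eps} \log n}{\eps^2} \cdot \frac{|A|}{S} \cdot \max(L) \cdot |L|\right)$, while the active state per batch --- a hash table of $S$ tuples of length $O(k)$ plus the $O(k)$ sliding states --- fits in $O(Sk) = O(S/\eps)$, meeting both targets of the lemma.

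The main obstacle I anticipate is avoiding explicit enumeration of $B_{i-1}$, which can be as large as $|A| \cdot n \cdot |L|$ in the early rounds. The device above --- hashing by the full tuple of previously chosen and current fingerprints and storing only counts per bucket --- sidesteps this: membership in $B_{i-1}$ is enforced automatically by the hash key, and individual triples are never touched. A harmless secondary point is that each bucket's count mixes actually-equal pairs with surviving bad pairs, but the equal-pair contribution is independent of $x$ and therefore drops out when we take the $x$ minimising the tally.
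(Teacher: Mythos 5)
Your proposal follows essentially the same route as the paper: a concatenation of $k=\Theta(1/\eps)$ Karp--Rabin components with primes of size $\Theta(\max(L)\,n^{\eps})$, a greedy round-by-round choice of each $x_i$ justified by the averaging argument (collision probability at most $\max(L)/p\le n^{-\eps}$ per bad pair, so $O(1/\eps)$ rounds kill all of the at most $|A|\cdot n\cdot|L|\le n^3$ bad pairs), the observation that the contribution of genuinely equal pairs is independent of $x$ (the paper's $B(f)-B(\mathrm{id})$ bookkeeping), and the same batching of $A$ into groups of size $S$ with per-length sliding-window counting. Two technical points are worth flagging. First, your polynomial-root bound ("at most $l-1$ values of $x$") presupposes that distinct substrings yield a difference polynomial that is nonzero modulo $p$; since the alphabet may be $\{0,\dots,n^{c}\}$ with $n^{c}$ far larger than $p=\Theta(\max(L)n^{\eps})$, two distinct characters can be congruent mod $p$, in which case no choice of $x$ separates the strings and the construction fails. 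The paper handles this by first splitting each character into $\Oh(1)$ characters over an alphabet smaller than $p$ (possible because $p=n^{\Omega(1)}$ and $|\Sigma|=n^{\Oh(1)}$); you need this reduction too. Second, since the whole point is a deterministic construction, the hash table keyed by fingerprint tuples should be replaced by a deterministic dictionary (sorting or a balanced search tree); the resulting extra logarithmic factor is already absorbed by the $\log n$ in the stated time bound, which your own accounting does not otherwise need.
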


\begin{proof}
\renewcommand{\S}{\mathbb{S}}
We let $p$ be a prime contained in the interval
$\left [ \max(L)n^{\eps}, 2\max(L)n^{\eps} \right ]$. The idea
is to choose $\phi$ to be 
$\phi = \left ( 
\phi_{p, x_1},  \ldots,  \phi_{p,x_k}
\right ), k = \Oh(1/\eps)$, where $\phi_{p,x_i}$ is the Karp-Rabin
fingerprint with parameters $p$ and $x_i$.

Let $\Sigma$ be the alphabet containing the characters of $T$.
If $p \le \abs{\Sigma}$ we note that since $p = n^{\Omega(1)}$ and
$\abs{\Sigma} = n^{\Oh(1)}$ we can split each character into $\Oh(1)$ characters
from a smaller alphabet $\Sigma'$ satisfying $p > \abs{\Sigma'}$. 
So wlog. assume $p > \abs{\Sigma}$ from now on. 

We let $\S$ be the set defined by:
\[
	\S = 
	\set{
			\left ( T[a \ldots a +(l-1)], T[i \ldots i +(l-1)] \right )
		\mid
			a \in A, l \in L, i \in \set{1,2,\ldots,n}
	}
\]
Then we have to find $\phi$ such that $\phi(u) = \phi(v) \iff u=v$ for
all $(u,v) \in \S$. We will choose $\phi_{p,x_1},\phi_{p,x_2},\ldots,\phi_{p,x_k}$ successively.
For any $1 \le l \le k$ we let $\phi_{\le l}
= \left ( 
\phi_{p, x_1}, \ldots, \phi_{p,x_l}
\right )$.

For any fingerprinting function $f$ we denote by $B(f)$ the number of pairs
$(u,v) \in \S$ such that $f(u) = f(v)$. We let $B(\text{id})$ denote the number
of pairs $(u,v) \in \S$ such that $u = v$. Hence $B(f) \ge B(\text{id})$ for 
every fingerprinting function $f$, and $f$ satisfies the requirement iff
$B(f) = B(\text{id})$.

Say that we have chosen $\phi_{\le l}$ for some $l < k$. 
We now compare $B(\phi_{\le l+1}) = B((\phi_{p,x_{l+1}},\phi_{\le l}))$
to $B(\phi_{\le l})$ when $x_{l+1}$ is chosen uniformly at 
random from $[p]$. For any
pair $(u,v) \in \S$ such that $u \neq v$ and $\phi_{\le l}(u) = \phi_{\le l}(v)$ we
see that, if $x_{l+1}$ is chosen randomly independently then the probability that
$\phi_{p,x_{l+1}}(u) = \phi_{p,x_{l+1}}(v)$ is at most
$\frac{\max\set{\abs{u},\abs{v}}}{p} \le n^{-\eps}$.
Hence
\[
	\E_{x_{l+1}}(B(\phi_{\le l+1}) - B(\text{id})) \le 
	\frac{B(\phi_{\le l}) - B(\text{id})}{n^{\eps}} \; ,
\]
and thus there exists $x_{l+1} \in \set{0,1,\ldots,p-1}$ such that
\[
	B(\phi_{\le l+1}) - B(\text{id})) \le 
	\frac{B(\phi_{\le l}) - B(\text{id})}{n^{\eps}} \; .
\]
Now consider the algorithm where we construct $\phi$ successively by
choosing $x_{l+1}$ such that $B(\phi_{\le l+1})$ is 
as small as possible. Then\footnote{$B(1)$ is equal to $\abs{\S}$ since
$1$ is the constant function}
\[
	B(\phi)-B(\text{id}) =
	B(\phi_{\le k})-B(\text{id}) \le 
	\frac{B(\phi_{\le k-1})-B(\text{id})}{n^{\eps}} \le 
	\ldots 
	\le
	\frac{B(1) - B(\text{id})}{n^{k\eps}} \; .
\]
Since $B(1) \le n^3$ and $B(\phi)-B(\text{id})$ is an integer it suffices
to choose  $k = \ceil{\frac{4}{\eps}}$ in order to conclude that
$B(\phi) = B(\text{id})$.

We just need to argue that for given $x_1,\ldots,x_l$ we can find
$B(\phi_{\le l})$ in space $\Oh \left ( \frac{1}{\eps} S \right )$
and time 
	$\Oh \left (
		\frac{1}{\eps}
		\abs{L}
		n
		\log n
		\frac{\abs{A}}{S}
	\right )$.
First we show how to do it in the case $S = \abs{A}$.
To this end we will count the number of collisions for all $(u,v) \in \S$
such that $\abs{u}=\abs{v} = l \in L$ for some fixed $l \in L$. First we
compute
$\phi_{\le l}(T[a \ldots a + (l-1])$ for all $a \in A$ and store them
in a multiset. Using a sliding window the fingerprints can be computed and
inserted in  time $\Oh \left ( l n \log n \right )$
Now for each position $i \in [1,n]$ we compute 
$\phi_{\le l}(T[i \ldots i + (l-1])$ and count the number of times
the fingerprint appears in the multiset. Using a sliding window this
can be done in $\Oh \left ( l n \log n \right )$ time as well.
Doing this for all $l \in L$ gives the number of collisions.

Now assume that $S < \abs{A}$. Then let $r = \ceil{\frac{\abs{A}}{S}} = \Oh \left ( \frac{\abs{A}}{S} \right )$
and partition $A$ into $A_1,\ldots,A_r$ consisting of
$\le S $ elements each. For each $j = 1,2,\ldots,r$
we define $\S_j$ as
\[
	\S_j = 
	\set{
			\left ( T[a \ldots a +(l-1)], T[i \ldots i +(l-1)] \right )
		\mid
			a \in A_j, l \in L, i \in \set{1,2,\ldots,n}
	}
\]
Then $\S_1,\ldots,\S_r$ is a partition of $\S$. For each $j = 1,2,\ldots,r$ we
can find the collisions in among the elements in $\S_j$ in the manner described
above. Doing this for each $j$ and adding the results is sufficient to get
the desired space and time complexity.
\qed
\end{proof}

\begin{corollary}
For any $\tau \in [1,n]$, the LCE problem can be solved by a deterministic data structure with $\Oh(n/\tau)$ space usage and $\Oh(\tau)$ query time. The data structure can be constructed in $\Oh(n^{2+\eps})$ time using $\Oh(n/\tau)$ space.
\end{corollary}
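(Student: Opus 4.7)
The plan is to instantiate \autoref{lem:derandomization} so that its output fingerprint $\phi$ is collision-free on exactly the set of substring pairs that the Monte Carlo query algorithm (\autoref{alg:query}) ever compares, and then run that query algorithm verbatim with this deterministic $\phi$.

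Concretely, recall that every fingerprint comparison performed by the query is of the form
\[
\phi\bigl(T[k\tau \ldots k\tau+2^l\tau-1]\bigr) \stackrel{?}{=} \phi\bigl(T[j \ldots j+2^l\tau-1]\bigr),
\]
with $k \in [n/\tau]$, $l \in [\log(n/\tau)]$, and $j \in [1,n]$. I would therefore apply \autoref{lem:derandomization} with the position set $A = \{k\tau : k \in [n/\tau]\}$, so $|A| = n/\tau$, the length set $L = \{2^l\tau : l \in [\log(n/\tau)]\}$, so $|L| = \Oh(\log(n/\tau))$ and $\max(L) \leq n$, and the preprocessing-space parameter $S = n/\tau$. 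The lemma then produces, deterministically, a $\phi$ that is collision-free on every pair relevant to the query algorithm, evaluable in $\Oh(1/\eps) = \Oh(1)$ time for constant $\eps$.

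Plugging the parameters into the space and time bounds of \autoref{lem:derandomization} gives preprocessing space $\Oh(S/\eps) = \Oh(n/\tau)$ and preprocessing time
\[
\Oh\!\left(\frac{n^{1+\eps}\log n}{\eps^2} \cdot \frac{|A|}{S} \cdot \max(L)\cdot |L|\right) = \Oh\!\left(\frac{n^{2+\eps}\log^{2} n}{\eps^{2}}\right) = \Oh(n^{2+\eps'})
\]
after absorbing the polylogarithmic factors into a slightly larger constant $\eps'>0$. With this deterministic $\phi$, the sampled fingerprints $\{f(i) : i \in \mathcal S\}$ of the Monte Carlo structure can be computed by a single left-to-right sweep in $\Oh(n)$ time and $\Oh(n/\tau)$ space, so the overall preprocessing stays within $\Oh(n^{2+\eps})$ time and $\Oh(n/\tau)$ space. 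The query algorithm then runs unchanged and, by the analysis following \autoref{lem:algorithm}, answers any query in $\Oh(\tau + \log(\ell/\tau))$ time, since each call to \texttt{check} now uses $\Oh(1)$-time fingerprint evaluations.

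To turn the $\Oh(\tau + \log(\ell/\tau))$ bound into the claimed $\Oh(\tau)$ bound, I would compose with the long-LCE data structure of \autoref{sec:differencecoverds} exactly as in the randomized case, but build that structure deterministically as well. Since the only randomized component there is the sparse suffix tree on $\Oh(n/\tau)$ suffixes, a direct deterministic construction (e.g., naively sorting the $\Oh(n/\tau)$ sampled suffixes by pairwise character comparisons, or by reusing the same derandomized $\phi$ together with a deterministic sparse suffix sort) fits inside $\Oh(n \cdot n/\tau \cdot \log n) = \Oh(n^{2+\eps})$ time and $\Oh(n/\tau)$ space, matching the target bounds. The main technical obstacle I expect is bookkeeping the parameters so that the $\log n$ and $\log(n/\tau)$ factors introduced by $|L|$ and by the evaluation cost of $\phi$ are safely absorbed into the $n^{\eps}$ slack; this is why choosing $k = \Theta(1/\eps)$ copies in \autoref{lem:derandomization} (as already done there) is essential, and why we must pick a slightly smaller constant $\eps$ internally to obtain the final exponent $2+\eps$ in the corollary.\qed
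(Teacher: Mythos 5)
Your proposal is correct and follows essentially the same route as the paper, which proves the corollary simply by instantiating \autoref{lem:derandomization} with $A = \{k\tau \mid k \in [n/\tau]\}$, $L = \{2^l\tau \mid l \in [\log(n/\tau)]\}$, $S = |A| = n/\tau$ and a suitably small constant $\eps$, absorbing the polylogarithmic factors into the $n^{\eps}$ slack exactly as you do. Your additional remarks on deterministically building the long-LCE structure of \autoref{sec:differencecoverds} (needed to reduce $\Oh(\tau+\log(\ell/\tau))$ to $\Oh(\tau)$) address a detail the paper leaves implicit, and your way of handling it fits within the stated bounds.
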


\begin{proof}
We use the lemma with $A = \{k\tau \mid k \in [n/\tau]\}$, $L = \{2^l\tau \mid l \in [\log(n/\tau)]\}$, $S = |A| = n/\tau$ and a suitable small constant $\eps>0$.
\end{proof}

\bibliographystyle{abbrv}
\bibliography{paper}






\end{document}